\newtheorem{theorem}{Theorem}[section]
\newtheorem{lemma}[theorem]{Lemma}
\theoremstyle{remark}
\title{Approximation Algorithms For The  Dispersion Problems in a Metric Space}
\author{Pawan K. Mishra\thanks{Department of Computer Science and Engineering,
        Indian Institute of Technology Guwahati}, Gautam K. Das \thanks{Department of Mathematics,   Indian Institute of Technology Guwahati}}
\begin{document}

\maketitle

	\begin{abstract}
		In this article, we consider the $c$-dispersion problem in a metric space $(X,d)$. 
		Let $P=\{p_{1}, p_{2}, \ldots, p_{n}\}$ be a set of $n$ points in a metric space $(X,d)$. For each point $p \in P$ and $S \subseteq P$, we define $cost_{c}(p,S)$ as the sum of distances from $p$ to the nearest $c $ points in $S \setminus \{p\}$, where $c\geq 1$ is a fixed  integer. We define $cost_{c}(S)=\min_{p \in S}\{cost_{c}(p,S)\}$ for  $S \subseteq P$. In the $c$-dispersion problem, a set $P$ of $n$ points in a metric space $(X,d)$ and a positive integer $k \in [c+1,n]$ are given. The objective is to find a subset $S\subseteq P$ of size $k$ such that $cost_{c}(S)$ is maximized.	
		
		 We propose a simple polynomial time greedy algorithm that produces a $2c$-factor approximation result for the $c$-dispersion problem in a metric space. The best known result for the $c$-dispersion problem in the Euclidean metric space $(X,d)$ is $2c^2$, where $P \subseteq \mathbb{R}^2$ and the distance function is Euclidean distance [ Amano, K. and Nakano, S. I., Away from Rivals, CCCG, pp.68-71, 2018 ].    We also prove that the $c$-dispersion problem in a metric space is $W[1]$-hard.

	\end{abstract}

	\section{Introduction}

	In facility location problem(FLP), we are given a set of $n$ locations on which some desired facilities to be placed and a positive integer $k$. The goal is to place  facilities  on $k$ locations out of $n$ given locations  such that specific objective is satisfied. Here,  specific objective depends on the nature of the problem. Suppose that the objective is to place $k$ facilities on $k$ locations, such that the closeness of chosen locations are undesirable. By closeness, we mean distance between a pair of facilities. We refer this FLP as a \emph{dispersion problem}. More specifically in the dispersion problem, we wish to minimize the interference between the placed facilities. One of  the most studied dispersion problems is the \emph{max-min dispersion problem}. 
	
	In the \emph{max-min dispersion problem},  we are given a set $ P = \{p_{1}, p_{2}, \ldots, p_{n}\}$ of $n$ locations,  the non-negative distance $d(p,q)$ between each pair of locations $p,q \in P$, and a positive integer $k$ ($k \leq n$). Here, $k$ refers to the number of facilities to be opened and distances are assumed to be symmetric. The objective is to find a $k$ size subset $S \subseteq P$  of locations such that $cost(S)= \min \{d(p,q) \mid p,q \in S \}$ is maximized.  This problem is known as $1$-dispersion problem in the literature.

 With reference to above mentioned problem, we extend the concept of closeness of a point $p_{i} \in S$ from one closest neighbor to given some specified number of closest  neighbor. Therefore, to preserve this notion of closeness of a point $p_{i}$, we need to add distances between   point $p_{i}$  and its specified number of nearest neighbors. We refer to such problem as $c$-dispersion problem.  Now, we define $c$-dispersion problem in a metric space $(X,d)$ as follows: 
	   
 \textbf{\emph{c-dispersion problem:}}  \emph{Let $P=\{p_{1}, p_{2}, \ldots, p_{n}\}$ be a set of $n$ points in a metric space $(X,d)$. For each point $p \in P$ and $S \subseteq P$, we define $cost_{c}(p,S)$ as the sum of   distances from $p$ to the first c nearest points in $S \setminus \{p\}$. We also define $cost_{c}(S)=\min_{p \in S}\{cost_{c}(p,S)\}$ for each $S \subseteq P$. In the $c$-dispersion problem, a set $P$ of $n$ points in a metric space $(X,d)$ and a positive integer $k \in [c+1,n]$ are given. The objective is to find a subset $S\subseteq P$ of size $k$ such that $cost_{c}(S)$ is maximized. }

In the real world, the dispersion problem has a huge number of applications. The possibility of opening chain stores in a community has piqued our interest in the dispersion problem. We need to open stores that are far apart from each other to eliminate/prevent self-competition among the stores. Installing dangerous facilities, such as nuclear power plants and oil tanks, is another condition in which dispersion is a concern. These facilities must be dispersed to the greatest extent possible, so that an accident at one site  does not affect others.  The dispersion problem is often used in information retrieval when we try to find a small subset of data with some desired variety from a large data set so that the small subset can be used as a valid sample to overview the large data set. 
	
\section{Related Work} \label{sec: relatedwork}
   
In 1977, Shier \cite{shier1977min} studied the two variants of the facility location problems on a tree network, namely the $k$-center problem and the max-min dispersion problem.  Sheir considered the continuum set of points on the rectifiable edges of a tree as a possible set of locations and showed that the max-min dispersion problem and $(k-1)$-center problem are dual. Sheir also established an equivalence between the max-min dispersion problem and $(k-1)$-center problem. In 1981, Chandrasekaran and Daughety \cite{chandrasekaran} studied the  max-min dispersion problem on a tree network and proposed a polynomial time algorithm.  In 1982, Chandrasekaran and Tamir \cite{chandrasekaran1982polynomially} also  studied the max-min dispersion problem and k-center problem on a tree network. They showed that if the set of locations is a finite subset of the continuum set of points on the rectifiable edges, then there exists an equivalence  between the max-min dispersion problem and  $(k-1)$-center problem on a tree network.  So, using  $k$-center algorithm on a tree (proposed in  \cite{frederickson1991optimal}),  a linear time algorithm for the max-min dispersion problem on a tree can be devised.  In \cite{erkut1990}, Erkut proved that the max-min dispersion problem is NP-hard even if the distance function satisfies  triangular  inequality. White \cite{white2} studied the max-min dispersion problem and proposed a  $3$-factor approximation result.  In 1991, Tamir \cite{tamir1991obnoxious} studied the max-min dispersion problem on a graph, where  continuum  set of points on the rectifiable edges are considered  as  locations.   Tamir  showed that for  a continuum set of locations on a graph, the max-min dispersion problem  can not be approximated within a factor of $3/2$ unless P=NP. In  \cite{tamir1991obnoxious},  a heuristic  is proposed  that  produces a $2$-factor approximation result for the max-min dispersion problem on a graph. Later in 1994, Ravi et al. \cite{ravi}  studied the max-min dispersion problem on a complete graph, where each edge is associated with a non-negative weight (distance).  They independently analyzed  the same heuristic  proposed in \cite{tamir1991obnoxious} (for the max-min dispersion problem on a complete graph), and showed that the same heuristic produces a $2$-approximation result for the complete graph. Furthermore, they also demonstrated that unless P=NP, the max-min dispersion problem on complete graph cannot be approximated within a factor of $2$ even if the distance function satisfies the triangular inequality. In 1991, Megiddo and Tamir \cite{megiddo1991p2} designed  an $O(k^2 \log^2 n)$ algorithm for the $k$-center problem on a line when points are sorted in an order.   Note that the same algorithm can be adapted to solve max-min dispersion problem on a line (points are not necessarily ordered) in polynomial time.  In 2007, Bhattacharya and Shi \cite{bhattacharya2007optimal} proposed a linear time algorithm for the $k$-center problem on a line, where points are not necessarily ordered. It is to be noted that this algorithm can be adapted to solve max-min dispersion problem on a line (points are not necessarily ordered) in polynomial time. In the geometric settings, the max-min dispersion problem was first introduced by  Wang and Kuo \cite{wang1988study}. They consider  the problem in a $d$-dimensional space with Euclidean distance function between two points. They  proposed a dynamic programming algorithm that solves the problem in $O(kn)$ time for $d=1$. They also  proved that for $d=2$, the problem is NP-hard.  Recently, in \cite{akagi}, the exact algorithm for the max-min dispersion problem was shown by establishing a  relationship with the maximum independent set problem. They proposed an $O(n^{wk/3} \log n)$ time algorithm, where $w < 2.373$. In \cite{akagi}, Akagi et al. also studied two special cases of the max-min dispersion problem where set of points  (i)  lies on a line, and (ii) lies on a circle. They proposed a polynomial time exact algorithm for both the cases.

The \emph{max-sum $k$-dispersion problem} is another popular variant of the dispersion problem. Here, the objective is to maximize the sum of distances between $k$ facilities.  Tamir \cite{tamir1991obnoxious} shown that the problem on a line has a trivial solution in $O(n)$ time. He also proved that the problem is solvable in $O(kn)$ time if the points are on a tree. Later, Ravi et al. \cite{ravi} also studied the problem on a line independently and gave a $O(\max(kn, n \log n))$ time algorithm. They also proposed a $4$-factor approximation algorithm if the distance function satisfies triangular inequality. In \cite{ravi}, they also proposed a $(1.571+ \epsilon)$-factor approximation algorithm for $2$-dimensional Euclidean space, where $\epsilon > 0$. In \cite{birnbaum} and \cite{hassin}, the approximation factor of $4$ was improved to $2$.  One can see \cite{baur2001approximation, chandra, kortsarz1993} for other variations of the dispersion problems.
	
In comparison with the max-min dispersion ($1$-dispersion) problem, a handful amount of research has been done in $c$-dispersion problem in a  metric space $(X,d)$.  Recently, in 2018, Amano and Nakano \cite{amano} proposed a greedy algorithm for the Euclidean  $2$-dispersion problem in $\mathbb{R}^2$, where the distance function between two points is the Euclidean distance. They have shown that the proposed greedy algorithm  produces an $8$-factor approximation result for the Euclidean $2$-dispersion problem in $\mathbb{R}^2$.  In 2020, \cite{amano2020} they analyzed the same greedy algorithm proposed in \cite{amano}, and shown that the greedy algorithm produces  a $4\sqrt{3}(\approx 6.92)$-factor approximation result for the  Euclidean $2$-dispersion problem. In \cite{amano}, they also proposed a $2c^2$ approximation result for the Euclidean $c$-dispersion problem in $\mathbb{R}^2$.

\subsection{Our Contribution}
In this article, we  consider the $c$-dispersion problem in a metric space and  propose a simple polynomial time  $2c$-factor approximation algorithm for a fixed $c$. We also proved that the $c$-dispersion problem in a metric space is $W[1]$-hard. 

\subsection{Organization of the Paper}
The remainder of the paper is organized as follows. In Section \ref{algorithm}, we propose a $2c$-factor approximation algorithm for the $c$-dispersion problem in a metric space. In Section \ref{hardness}, we prove that the $c$-dispersion problem in a metric space is $W[1]$-hard. We conclude the paper in Section \ref{Conclusion}.

\section{$2c$-Factor Approximation Algorithm for the $c$-Dispersion Problem in Metric Space}\label{algorithm}
In this section, we propose a  greedy algorithm for the $c$-dispersion problem in a metric space $(X,d)$. We will show that this algorithm guarantees  $2c$-factor approximation result for the $c$-dispersion problem in a metric space. Now, we discuss the greedy algorithm as follows. Let $I=(P, k)$ be an arbitrary instance of the $c$-dispersion problem in a metric space $(X,d)$, where $P=\{p_{1}, p_{2}, \ldots, p_{n}\}$  is  the set of $n$ points and $k \in [c+1,n]$ is a positive integer. It is an iterative algorithm. Initially, we choose a subset $S_{c+1} \subseteq S$ of size $c+1$ such that $cost_{c}(S_{c+1})$ is maximized. Next, we add one point  $p \in P$ into $S_{c+1}$ to construct $S_{c+2}$, i.e., $S_{c+2}=S_{c+1} \cup \{p\}$, such that $cost_{c}(S_{c+2})$ is maximized and continues this process up to the construction of $S_{k}$. The pseudo code of the algorithm is described in Algorithm \ref{GreedyDispersionAlgorithm}.  
	
\begin{algorithm}[H]
\caption{GreedyDispersionAlgorithm$(P, k)$}
\textbf{Input: } A set $P=\{p_1, p_2, \ldots, p_n\}$ of $n$ points, and a positive integer  $k (c+1 \leq k \leq n)$, along with distance function $d$.
		
\textbf{Output:} A subset $S_k \subseteq P$ of size $k$.	

\begin{algorithmic}[1]
			\State Compute $S_{c+1} = \{p_{i_1}, p_{i_2}, \ldots, p_{i_c}, p_{i_{c+1}}\} \subseteq P$ such that $cost_c(S_{c+1})$ is maximized. \label{line1algo1}
			\For {($j = c+2, c+3 , \ldots,  k$)} \label{line2algo}
			\State Let $p \in P\setminus S_{j-1}$ such that $cost_c(S_{j-1} \cup \{p\})$ is maximized. \label{impline}
			\State $S_j \leftarrow S_{j-1} \cup \{p\}$
			\EndFor
			\State return $(S_k)$
\end{algorithmic}  \label{GreedyDispersionAlgorithm}
\end{algorithm}
	
Let $S^*=\{p_{1}^*,p_{2}^*, \ldots, p_{k}^*\} \subseteq P$ be an optimum solution for the $c$-dispersion problem, i.e., $cost_{c}(S^*)=\max\limits_{{\substack {S \subseteq P \\ |S|=k}}} \{cost_{c}(S) \}$. 
We define a ball $B(p)$  at each point $p\in X$ as follows: $B(p) = \{q \in X |d(p, q) \leq \frac{cost_c(S^*)}{2c}\}$.  Let $B^*=\{B(p) \mid p \in S^*\}$. A point $q$ is \emph{properly contained} in $B(p)$, if $d(p,q) < \frac{cost_c(S^*)}{2c}$, whereas if $d(p,q) \leq \frac{cost_c(S^*)}{2c}$, then we say that point $q$ is \emph{contained} in $B(p)$.

\begin{lemma} \label{lemma01}
For any point $p \in P$, $B(p)$ can properly contains at most $c$ points from the optimal set $S^*$.
\end{lemma}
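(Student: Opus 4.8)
The plan is to argue by contradiction. Suppose, toward a contradiction, that $B(p)$ properly contains at least $c+1$ points of the optimal set $S^*$, say $q_1, q_2, \ldots, q_{c+1}$, so that $d(p, q_i) < \frac{cost_c(S^*)}{2c}$ for every $i$. The goal is to exhibit a single point of $S^*$ whose $cost_c$ with respect to $S^*$ is strictly smaller than $cost_c(S^*)$, which is impossible since $cost_c(S^*) = \min_{q \in S^*} cost_c(q, S^*)$ by definition.

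To do this I would fix one of these points, say $q_{c+1}$, and bound $cost_c(q_{c+1}, S^*)$ from above. The remaining $c$ points $q_1, \ldots, q_c$ all lie in $S^* \setminus \{q_{c+1}\}$, so they form a valid set of $c$ candidate neighbors; since $cost_c(q_{c+1}, S^*)$ sums distances to the $c$ \emph{nearest} points of $S^* \setminus \{q_{c+1}\}$, this nearest-neighbor sum is at most the sum over any $c$ candidates, giving $cost_c(q_{c+1}, S^*) \le \sum_{i=1}^{c} d(q_{c+1}, q_i)$. The key estimate is then the two-hop triangle inequality through $p$: for each $i$, $d(q_{c+1}, q_i) \le d(q_{c+1}, p) + d(p, q_i) < \frac{cost_c(S^*)}{2c} + \frac{cost_c(S^*)}{2c} = \frac{cost_c(S^*)}{c}$.

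Summing over the $c$ indices yields $cost_c(q_{c+1}, S^*) < c \cdot \frac{cost_c(S^*)}{c} = cost_c(S^*)$. Since $q_{c+1} \in S^*$, the definition of $cost_c(S^*)$ forces $cost_c(S^*) \le cost_c(q_{c+1}, S^*) < cost_c(S^*)$, a contradiction. Hence $B(p)$ can properly contain at most $c$ points of $S^*$.

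I expect the only subtle point to be the bookkeeping with strict versus non-strict inequalities: it is precisely \emph{proper} containment that supplies the strict bound $d(p, q_i) < \frac{cost_c(S^*)}{2c}$, which propagates through the sum to make the final inequality strict and thereby produce the contradiction; with mere containment one would only obtain $cost_c(q_{c+1}, S^*) \le cost_c(S^*)$, which is not contradictory. The remaining ingredients — that the nearest-$c$ sum is dominated by the sum over any $c$ candidate neighbors, and the application of the triangle inequality — are routine and require no further computation.
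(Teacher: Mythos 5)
Your proof is correct and follows essentially the same route as the paper's: fix one of the $c+1$ properly contained points, bound the distances to the other $c$ via the triangle inequality through $p$, and conclude that its $cost_c$ with respect to $S^*$ is strictly below $cost_c(S^*)$, contradicting minimality. In fact your writeup is slightly more explicit than the paper's (which leaves the domination step $cost_c(q_{c+1},S^*)\le\sum_i d(q_{c+1},q_i)$ and the final contradiction implicit), but the argument is the same.
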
 

\begin{proof}
On the  contrary assume that $B(p)$ properly contains $c+1$ points. Without loss of generality assume that    $p_{1}^*, p_{2}^*, \ldots ,p_{c+1}^* (\in S^*)$ are properly contained in $B(p)$. This implies that each of $d(p,p_1^*),d(p,p_2^*),  \ldots , d(p, p_{c+1}^*)$ is less than  $\frac{cost_c(S^*)}{2c}$. Since distance function $d(.,.)$ satisfies triangular inequality, $d(p_{1}^*,p_{j}^*) \leq d(p_{1}^*, p)+d(p,p_{j}^*)$, for $j= 2,3,\ldots,c+1$. This implies $\sum_{j=2}^{c+1}d(p_{1}^*,p_{j}^*) \leq \sum_{j=2}^{c+1} d(p_{1}^*,p) +\sum_{j=2}^{c+1} d(p,p_{j}^*) < 2c \times \frac{cost_c(S^*)}{2c} =cost_c(S^*)$. This leads to a contradiction that $p_1^* , p_2^*, \ldots, p_{c+1}^*  \in S^*$. Thus, the lemma.  
 \end{proof}
	 
\begin{lemma} \label{lemma02}
For any point $s \in P$,  if $ B' \subseteq  B^*$ is the set of balls that properly contains $s$, then $| B'| \leq c$.  
\end{lemma}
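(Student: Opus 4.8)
The plan is to argue by contradiction, in direct parallel with the proof of Lemma \ref{lemma01}, but now with the roles of the fixed point and the ball centers interchanged. Suppose, for the sake of contradiction, that more than $c$ balls of $B^*$ properly contain $s$; without loss of generality, say $B(p_1^*), B(p_2^*), \ldots, B(p_{c+1}^*)$ all properly contain $s$, where $p_1^*, p_2^*, \ldots, p_{c+1}^* \in S^*$. By the definition of proper containment, this means $d(p_i^*, s) < \frac{cost_c(S^*)}{2c}$ for every $i \in \{1, 2, \ldots, c+1\}$. The first step is then to push these bounds through the common point $s$ using the triangle inequality: for each $j \in \{2, 3, \ldots, c+1\}$ we get $d(p_1^*, p_j^*) \leq d(p_1^*, s) + d(s, p_j^*) < \frac{cost_c(S^*)}{2c} + \frac{cost_c(S^*)}{2c} = \frac{cost_c(S^*)}{c}$, and summing over the $c$ indices $j = 2, \ldots, c+1$ yields $\sum_{j=2}^{c+1} d(p_1^*, p_j^*) < c \cdot \frac{cost_c(S^*)}{c} = cost_c(S^*)$.

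The crucial step, and the one that actually invokes the structure of $cost_c$ rather than a pure packing argument, is to observe that $p_2^*, \ldots, p_{c+1}^*$ are $c$ distinct points all lying in $S^* \setminus \{p_1^*\}$. Since $cost_c(p_1^*, S^*)$ is by definition the sum of distances from $p_1^*$ to its $c$ \emph{nearest} neighbours in $S^* \setminus \{p_1^*\}$, and no collection of $c$ neighbours can yield a smaller distance-sum than the nearest $c$, I would conclude $cost_c(p_1^*, S^*) \leq \sum_{j=2}^{c+1} d(p_1^*, p_j^*) < cost_c(S^*)$. Finally, this contradicts the defining relation $cost_c(S^*) = \min_{p \in S^*} cost_c(p, S^*)$, because that minimum is at most $cost_c(p_1^*, S^*)$, which has just been shown to be strictly below $cost_c(S^*)$. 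Hence at most $c$ balls of $B^*$ can properly contain $s$, i.e. $|B'| \leq c$.

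The hard part is not the triangle-inequality computation, which essentially mirrors Lemma \ref{lemma01}, but rather the inequality $cost_c(p_1^*, S^*) \leq \sum_{j=2}^{c+1} d(p_1^*, p_j^*)$. Its validity rests on the fact that the chosen centers $p_2^*, \ldots, p_{c+1}^*$ are genuinely $c$ distinct points, each different from $p_1^*$, so that they constitute a legitimate family of $c$ neighbours of $p_1^*$ within $S^*$ whose distance-sum therefore upper-bounds the optimal (minimal) choice of $c$ neighbours used in the definition of $cost_c$. The distinctness of the $p_j^*$ is guaranteed because they are ball centers corresponding to distinct elements of the optimal set $S^*$, so this point, while worth stating explicitly, does not present a real obstruction; everything else reduces to the same symmetric bookkeeping as in the previous lemma.
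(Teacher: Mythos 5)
Your proof is correct, but it takes a different route from the paper's. The paper's proof is a two-line reduction to Lemma \ref{lemma01} that exploits the symmetry of the metric: since $d(p_i^*,s)=d(s,p_i^*)$, the hypothesis that $s$ is properly contained in $B(p_1^*),\ldots,B(p_{c+1}^*)$ immediately means the ball $B(s)$ centered at $s$ properly contains the $c+1$ optimal points $p_1^*,\ldots,p_{c+1}^*$, contradicting Lemma \ref{lemma01} as a black box. You never invoke Lemma \ref{lemma01}; instead you inline the triangle-inequality computation through the hub $s$ and derive $cost_c(p_1^*,S^*)\leq\sum_{j=2}^{c+1}d(p_1^*,p_j^*)<cost_c(S^*)$, contradicting the minimality in the definition of $cost_c(S^*)$ --- which is, in substance, a re-proof of Lemma \ref{lemma01} with $s$ playing the role of the center. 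What the paper's route buys is brevity and modularity: the packing argument is proved once and reused. What your route buys is self-containedness, and it makes fully explicit the step that the paper leaves terse even inside its proof of Lemma \ref{lemma01}, namely that any $c$ distinct points of $S^*\setminus\{p_1^*\}$ upper-bound the nearest-$c$ sum $cost_c(p_1^*,S^*)$, so the strict inequality clashes with $cost_c(S^*)=\min_{p\in S^*}cost_c(p,S^*)$. Your remark about the distinctness of the centers is also sound, since distinct balls in $B'$ correspond to distinct points of $S^*$.
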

		
\begin{proof}
On the contrary assume that  $| B'| > c$. Without loss of generality assume that  $\{ B(p_1^*), B(p_2^*), \ldots , B(p_c^*),  B(p_{c+1}^*)\} \subseteq  B'$ are $c+1$ balls that properly contains  $s$. Here, $\{p_1^*,p_2^*, \ldots,p_c^*, p_{c+1}^*\} \subseteq S^*$. Since  $s$ is properly contained in $ B(p_1^*), B(p_2^*), \ldots, B(p_c^*), B(p_{c+1}^*)$, therefore each $d(p_{1}^*, s), d(p_{2}^*,s), \ldots, d(p_{c}^*, s),  d(p_{c+1}^*, s)$ is less than  $\frac{cost_{c}(S^*)}{2c}$. So, the ball $ B(s)=\{q  \mid d(s,q) \leq \frac{cost_{c}(S^*)}{2c}\}$ properly contains $c+1$ points $p_{1}^*, p_{2}^*, \ldots, p_{c}^*, p_{c+1}^*$ of the optimal set $S^*$, which is a contradiction to Lemma \ref{lemma01}. Thus, the lemma. 
\end{proof}	
	 
\begin{lemma}\label{lemma03}
Let $M \subseteq P$ be a set of points such that $|M|<k$. If $cost_{c}(M) \geq \frac{cost_{c}(S^*)}{2c}$, then there exists at least one ball $B(p_j^*) \in B^*=\{B(p_1^*),B(p_2^*), \ldots, B(p_k^*)\}$ that properly contains less than $c$ number of points in  $M$.
\end{lemma}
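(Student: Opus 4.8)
The plan is to argue by contradiction via a double-counting (incidence) argument on the pairs consisting of a ball and a point of $M$ that it properly contains. Suppose, to the contrary, that \emph{every} ball $B(p_j^*) \in B^*$ properly contains at least $c$ points of $M$. I would then count, in two different ways, the number of incident pairs $(B(p_j^*), s)$ with $s \in M$ properly contained in $B(p_j^*)$.

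First, grouping the incidences by the $k$ balls: under the contradiction hypothesis each ball accounts for at least $c$ such pairs, so the total number of incidences is at least $kc$. Second, grouping the incidences by the points of $M$: since $M \subseteq P$, each $s \in M$ is a point of $P$, and Lemma \ref{lemma02} guarantees that any such point is properly contained in at most $c$ balls of $B^*$; hence the total number of incidences is at most $|M| \cdot c$. Combining the two counts yields $kc \leq |M| \cdot c$, so $k \leq |M|$, which contradicts the hypothesis $|M| < k$. Therefore at least one ball $B(p_j^*)$ must properly contain fewer than $c$ points of $M$, as claimed.

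The only real work is to set up the incidence count correctly and to invoke Lemma \ref{lemma02} for the per-point upper bound, so I do not anticipate a genuine obstacle beyond careful bookkeeping about which inequalities are strict (proper containment). One thing I would flag is that the stated hypothesis $cost_{c}(M) \geq \frac{cost_{c}(S^*)}{2c}$ does not appear to be needed for this particular conclusion — the contradiction follows from $|M| < k$ together with Lemma \ref{lemma02} alone. I would therefore expect that cost hypothesis to be used not in proving Lemma \ref{lemma03} itself, but in the subsequent step of the analysis, where the existence of a ball containing fewer than $c$ points of $M$ is exploited to select a new greedy point whose addition preserves the cost lower bound.
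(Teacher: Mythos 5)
Your proof is correct and is essentially the same as the paper's: the paper phrases the double count as a bipartite graph on $M \cup B^*$ whose edges are the proper-containment incidences, bounds the total degree from the ball side below by $kc$ and from the point side above by $c|M|$ via Lemma \ref{lemma02}, and derives the same contradiction with $|M| < k$. Your observation that the hypothesis $cost_{c}(M) \geq \frac{cost_{c}(S^*)}{2c}$ is not actually used is also accurate --- the paper's proof never invokes it either; it only matters in the theorem where the lemma is applied.
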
	
	 
\begin{proof}
On the contrary assume that there does not exist any $j \in [1,k]$ such that $B(p_j^*)$ properly contains less than $c$ number of points in  $M$. Construct a bipartite graph $G(M \cup { B^*}, {\cal E})$ as follows: (i) $M$ and $ B^*$ are two partite vertex sets, and (ii) for $u \in M$, $(u, B(p_j^*)) \in {\cal E}$ if and only if $u$ is properly contained in $B(p_j^*)$.
	 	
According to assumption, each ball $B(p_j^*)$ properly contains at least $c$ points in $M$. Therefore, the total degree of the vertices in $ B^*$ in $G$ is at least $kc$. Note that $| B^*| = k$. On the other hand, the total degree of the vertices in $M$ in   $G$ is at most $c \times |M|$ (see Lemma \ref{lemma02}). Since $|M| < k$, the total degree of the vertices in $M$ in   $G$ is less than $ck$, which leads to a contradiction that the total degree of the vertices in $ B^*$ in  $G$ is at least $ck$. Thus, there exists at least one $p_{j}^* \in S^*$ such that  ball $B(p_j^*) \in B^*$  properly contains at most  $c-1$ points in $M$. 
\end{proof}

	 \begin{lemma}\label{prev_lemma}
	 The running time of Algorithm \ref{GreedyDispersionAlgorithm}  is $O(n^{c+1})$. 
	 \end{lemma}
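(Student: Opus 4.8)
The plan is to account for the two phases of Algorithm~\ref{GreedyDispersionAlgorithm} separately: the initialization in Line~\ref{line1algo1}, and the greedy augmentation loop starting at Line~\ref{line2algo}. Since $c$ is treated as a fixed constant, I expect the brute-force initialization to be the dominant cost, so the task reduces to showing that Line~\ref{line1algo1} runs in $O(n^{c+1})$ time and that the loop is subsumed by this bound.

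First I would bound Line~\ref{line1algo1}. The size-$(c+1)$ subset of $P$ maximizing $cost_c(\cdot)$ is found by enumerating all $\binom{n}{c+1}$ such subsets, and for fixed $c$ we have $\binom{n}{c+1}=O(n^{c+1})$. The point to observe is that evaluating $cost_c(S)$ for a set $S$ with $|S|=c+1$ is essentially free: for each $p\in S$ the set $S\setminus\{p\}$ has exactly $c$ points, so its $c$ nearest points are all of $S\setminus\{p\}$, and $cost_c(p,S)$ is just the sum of these $c$ distances, computable in $O(c)$ time. Taking the minimum over the $c+1$ choices of $p$ costs $O(c^2)=O(1)$ time, and multiplying by the number of subsets gives $O(n^{c+1})$ for Line~\ref{line1algo1}.

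Next I would bound the loop, which runs $k-c-1<n$ times. In iteration $j$ it inspects each of the at most $n$ candidates $p\in P\setminus S_{j-1}$ (Line~\ref{impline}) and evaluates $cost_c(S_{j-1}\cup\{p\})$ on a set of size at most $k\le n$. A naive evaluation of $cost_c$ on a set of size $m$ costs $O(m^2)$ (for each of the $m$ points, select its $c$ nearest among the remaining $m-1$ by one linear scan maintaining the $c$ smallest distances, then sum), giving a loop bound of $O(n^4)$. By storing, for every point, the $c$ smallest distances to the current set and refreshing these buffers in $O(n)$ time per insertion, each candidate is scored in $O(n)$ rather than $O(n^2)$, and the loop improves to $O(n^3)$. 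For $c\ge 2$ this is dominated by the $O(n^{c+1})$ initialization, and for $c=1$ the buffers reduce to a single scalar per point, scoring each candidate in $O(1)$ and bringing the loop to $O(n^2)=O(n^{c+1})$. Hence in every case the two phases combine to the claimed $O(n^{c+1})$ running time.

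I expect the only delicate point to be the per-subset evaluation in Line~\ref{line1algo1}: one must notice that for a set of size exactly $c+1$ the nearest-neighbour selection degenerates, so that the expensive enumeration factor $\binom{n}{c+1}$ is multiplied by only a constant. Confirming that it is this brute-force initialization, and not the greedy loop, that governs the overall bound is precisely what makes the stated $O(n^{c+1})$ correct.
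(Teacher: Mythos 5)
Your proposal is correct and follows the same decomposition as the paper's proof: enumerate all $\binom{n}{c+1}=O(n^{c+1})$ subsets for Line~\ref{line1algo1}, bound the greedy loop separately, and observe that the initialization dominates. The difference is in the care taken with the loop. The paper simply asserts that each iteration of the loop costs $O(n^2)$ (i.e., $O(n)$ per candidate, which already presupposes some incremental scoring scheme such as your buffers) and then concludes that the resulting $O(n^3)$ loop is dominated by Line~\ref{line1algo1}; that conclusion is valid only for $c\geq 2$, since for $c=1$ the claimed bound $O(n^{c+1})=O(n^2)$ does not dominate $O(n^3)$. Your proof closes exactly this gap: you make the $O(n)$-per-candidate scoring explicit via the $c$-smallest-distance buffers, and for $c=1$ you exploit the fact that the minimum decomposes (maintaining $cost_1$ of the current set and each candidate's distance to the set as scalars) to score a candidate in $O(1)$, bringing the loop to $O(n^2)$ and making the stated bound hold in all cases. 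You also make explicit the point, left implicit in the paper, that evaluating $cost_c$ on a set of size exactly $c+1$ costs only $O(1)$ for fixed $c$, so the enumeration factor is multiplied by a constant. In short, your argument is not a different route, but it is a strictly tighter version of the paper's, and of the two it is the only one that actually establishes the lemma as stated when $c=1$.
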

	 
	 \begin{proof}
	     In line number \ref{line1algo1}, algorithm computes  $S_{c+1}$ such that $cost_{c}(S_{c+1})$ is maximized. To compute it,  algorithm  calculates  $cost_{c}(S_{c+1})$ for each distinct subset $S_{c+1} \subseteq P$ independently.  So, algorithm invests $ O{n \choose c+1 }= O(n^{c+1})$ time to compute $S_{c+1}$ such that $cost_{c}(S_{c+1})$ is maximized. Note that the value of $c$ is fixed. Now, for choosing a point in each iteration, algorithm takes $O(n^2)$ time. Here, the number of iteration is bounded by $k \leq n$.  So, to construct a set $S_{k}$ of size $k$ from $S_{c+1}$, algorithm takes $O(n^3)$ time.  Since  line number \ref{line1algo1} of Algorithm \ref{GreedyDispersionAlgorithm} takes a substantial amount of time compared to other steps of the algorithm, therefore the overall time complexity is  $O(n^{c+1})$.
	 \end{proof}
	 
	\begin{theorem}
		Algorithm \ref{GreedyDispersionAlgorithm} produces $2c$-factor approximation result in polynomial time for the $c$-dispersion problem.
	\end{theorem}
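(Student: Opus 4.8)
The polynomial running time is already supplied by Lemma \ref{prev_lemma}, so the plan is to establish only the approximation ratio, namely that $cost_c(S_k) \ge \frac{cost_c(S^*)}{2c}$; since the problem is a maximization, this is exactly the claimed $2c$-factor guarantee. Writing $r = \frac{cost_c(S^*)}{2c}$ for the radius of the balls $B(\cdot)$, I would prove by induction on $j$ that $cost_c(S_j) \ge r$ for every $j \in [c+1,k]$, the case $j=k$ being the theorem. For the base case I would compare $S_{c+1}$, which maximizes $cost_c$ over all $(c+1)$-subsets of $P$, against an arbitrary $(c+1)$-subset $T \subseteq S^*$: for $p \in T$ the set $T \setminus \{p\}$ has exactly $c$ elements and lies in $S^* \setminus \{p\}$, so $cost_c(p,T) \ge cost_c(p,S^*) \ge cost_c(S^*)$ because the $c$ nearest neighbours minimize the sum; hence $cost_c(S_{c+1}) \ge cost_c(T) \ge cost_c(S^*) = 2cr \ge r$.

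For the inductive step I would assume $cost_c(S_{j-1}) \ge r$ with $|S_{j-1}| = j-1 < k$. Because the greedy step selects the cost-maximizing point, it suffices to exhibit a single $p \in P \setminus S_{j-1}$ with $cost_c(S_{j-1}\cup\{p\}) \ge r$. Lemma \ref{lemma03} supplies an optimal ball $B(p_j^*)$ that properly contains at most $c-1$ points of $S_{j-1}$, and I would add its centre $p_j^*$, then verify $cost_c(q, S_{j-1}\cup\{p_j^*\}) \ge r$ for each $q$ in the new set. For $q = p_j^*$ this is immediate: at most $c-1$ points of $S_{j-1}$ lie strictly inside $B(p_j^*)$, so at least one of its $c$ nearest neighbours is at distance $\ge r$, forcing the sum to be $\ge r$. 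For an old point $q$ with $d(q,p_j^*) \ge r$ it is also easy: either $p_j^*$ does not enter $q$'s $c$ nearest (the cost is unchanged) or it does, and that single term is already $\ge r$.

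The hard part will be the remaining case, an old point $q$ with $\rho := d(q,p_j^*) < r$, where inserting the nearby centre $p_j^*$ could a priori pull $q$'s cost below $r$. Adding $p_j^*$ changes the cost only when $\rho$ is smaller than $q$'s current $c$-th nearest distance, in which case the new cost equals $\rho + \sum_{i=1}^{c-1} d_i$, where $d_1 \le \cdots \le d_{c-1}$ are the $c-1$ smallest distances from $q$ within $S_{j-1}$. The key observation I would exploit is that $B(p_j^*)$ contains at most $c-1$ points of $S_{j-1}$ and $q$ is already one of them, so among the $c-1$ nearest neighbours of $q$ at least one point $x$ lies outside $B(p_j^*)$; then $d(p_j^*,x) \ge r$, and the triangle inequality gives $d(q,x) \ge d(p_j^*,x) - d(p_j^*,q) \ge r - \rho$, whence the new cost is at least $\rho + d(q,x) \ge \rho + (r-\rho) = r$. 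In short, the sparsity from Lemma \ref{lemma03} combined with the triangle inequality is precisely what prevents a close insertion from destroying a neighbour's cost; this is the step I expect to be the crux. Once all three cases are in hand, $cost_c(S_{j-1}\cup\{p_j^*\}) \ge r$, hence $cost_c(S_j) \ge r$, closing the induction and yielding $cost_c(S_k) \ge r = \frac{cost_c(S^*)}{2c}$.

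One secondary technicality I would still settle is that the centre $p_j^*$ can be taken in $P \setminus S_{j-1}$, so that it is a legal insertion. Since $|S^* \setminus S_{j-1}| \ge k-(j-1) \ge 1$, I would re-run the counting behind Lemma \ref{lemma03} over the optimal centres lying outside $S_{j-1}$ to locate a sparse ball whose centre is new, or else dispatch the coincidence $p_j^* \in S_{j-1}$ by a short separate argument.
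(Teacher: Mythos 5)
Your proposal follows the paper's proof essentially step for step: induction on the invariant $cost_c(S_j)\ge r$ with $r=\frac{cost_c(S^*)}{2c}$, base case from the optimality of $S_{c+1}$, and an inductive step that uses Lemma \ref{lemma03} to obtain a ball $B(p_j^*)$ properly containing at most $c-1$ points of $S_{j-1}$, then checks the new centre, the far old points, and the near old points via the triangle inequality. If anything, your execution is tighter than the paper's: your treatment of a near old point $q$ (among its $c-1$ nearest neighbours in $S_{j-1}$ some $x$ must lie outside $B(p_j^*)$, so $d(q,x)\ge r-\rho$) is the correct justification of what the paper calls case (2), where the paper cites Lemma \ref{lemma02} although the sparsity guaranteed by Lemma \ref{lemma03} is what is actually needed; likewise your base case spells out why $cost_c(T)\ge cost_c(S^*)$ for a $(c+1)$-subset $T\subseteq S^*$, which the paper asserts without comment.

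The one substantive divergence is the ``secondary technicality'' you flag: whether the sparse centre $p_j^*$ can be taken in $P\setminus S_{j-1}$. You are right to flag it, and the paper has the same gap silently (it ``chooses $p_j^*$'' to build $S_{i+1}$ without ruling out $p_j^*\in S_i$). But be aware that your first proposed repair does not work as stated. Re-running the counting of Lemma \ref{lemma03} over only the $k-a$ balls centred at $S^*\setminus S_{j-1}$, where $a=|S^*\cap S_{j-1}|$, gives at least $c(k-a)$ required incidences against an upper bound of $c|S_{j-1}|\le c(k-1)$ from Lemma \ref{lemma02}; this yields no contradiction once $a\ge 2$ and $c\ge 2$. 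Even the sharper fact that a point of $S^*\cap S_{j-1}$ lies in at most $c-1$ balls centred at other optimal points (since its own ball can properly contain at most $c-1$ other optimal points) only lowers the bound to $c|S_{j-1}|-a$, which still does not contradict $c(k-a)$. A genuine patch seems to require additional metric structure --- for instance, that two optimal centres whose balls share a properly contained point are within $2r$ of each other, while every optimal point has at most $c-1$ optimal points within $2r$ --- so the deferred case is a real gap in both your write-up and the paper's, and it is not dispatched by the counting you sketch.
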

	
	\begin{proof}
		Let $I = (P, k)$ be an arbitrary input instance of the $c$-dispersion problem in a metric space $(X,d)$, where $P=\{p_1, p_2, \ldots, p_n\}$ is the set of $n$ points  and $k$ is a positive integer. Let $S_k$ and $S^*$ be an output of Algorithm \ref{GreedyDispersionAlgorithm} and an optimum solution, respectively, for the instance $I$. To prove the theorem, we  show that  $\frac{cost_c(S^*)}{cost_c(S_k)} \leq 2c$ . Here we use induction to show that $cost_c(S_i) \geq \frac{cost_c(S^*)}{2c}$ for each $i = c+1, c+2, \ldots, k$. 
		
		Since $S_{c+1}$ is an optimum solution for $c+1$ points (see line number \ref{line1algo1} of Algorithm \ref{GreedyDispersionAlgorithm}), therefore $cost_c(S_{c+1}) \geq cost_{c}(S^*) \geq  \frac{cost_c(S^*)}{2c}$ holds. Now, assume that the condition holds for  each $i$ such that $c+1 \leq i < k$. We will prove that the condition holds for $(i+1)$ too.
		
		Let $B^*$ be the set of balls corresponding to points in $S^*$. Since $i < k$ and $S_i \subseteq P$ with condition $cost_{c}(S_i) \geq \frac{cost_{c}(S^*)}{2c}$, then there exists at least one ball $B(p_j^*)\in B^*$ that properly contains at most $c-1$ points in $S_i$ (see Lemma \ref{lemma03}). Now, if $B(p_j^*)$  properly contains  $c-1$ points of the set $S_{i}$, then the distance of $p_{j}^*$ to the $c$-th closest point in $S_{i}$ is greater than or equal to $ \frac{cost_{c}(S^*)}{2c}$. Now, if we choose  a point $p_{j}^* \in S^*$ to the set $S_{i}$ to construct set $S_{i+1}$ (line number \ref{impline} of the algorithm), then $cost_{c}(p_j^*, S_{i+1}) \geq \frac{cost_{c}(S^*)}{2c}$. 
		Let $p \in S_{i+1}$ be an arbitrary point. Now, to prove $cost_{c}(p,S_{i+1}) \geq \frac{cost_{c}(S^*)}{2c}$, we consider following cases: (1) $p_j^*$ is not in the $c$-th nearest point of $p$ in $S_{i+1}$, and (2) $p_j^*$ is one of the $c$ nearest points of $p$ in $S_{i+1}$. In case (1), $cost_{c}(p,S_{i+1}) \geq \frac{cost_{c}(S^*)}{2c}$ by the definition of the set $S_{i}$, and in case (2) there exists at least one point $q \in S_{i+1}$ such that $d(q,p_j^*) \geq \frac{cost_{c}(S^*)}{2c}$, and $q$ is one of the $c$ nearest points of $p$ (see Lemma \ref{lemma02}). Therefore, sum of the distances of $p$ from $c$ nearest point in $S_{i+1}$ is greater than $d(p,p_j^*)+d(p,q) \geq d(p_j^*,q) \geq \frac{cost_{c}(S^*)}{2c}$. Therefore, we can conclude that if we consider the set $S_{i+1}=S_{i} \cup \{p_j^*\}$, then $cost_{c}(S_{i+1}) \geq \frac{cost_{c}(S^*)}{2c}$.
		  
		Since our algorithm chooses a point (see line number \ref{impline} of Algorithm \ref{GreedyDispersionAlgorithm}) that maximizes $cost_{c}(S_{i+1})$, therefore it will always choose a point in the iteration $i+1$ such that  $cost_{c}(S_{i+1}) \geq \frac{cost_{c}(S^*)}{2c }$.

		By the help of Lemma \ref{lemma01}, Lemma \ref{lemma02} and Lemma \ref{lemma03}, we have $cost_{c}(S_{i+1})\geq \frac{cost_{c}(S^*)}{2c }$ and thus condition holds for $(i+1)$ too.  Also, Lemma \ref{prev_lemma} says that Algorithm \ref{GreedyDispersionAlgorithm} computes $S_k$ in polynomial time. Therefore,  Algorithm \ref{GreedyDispersionAlgorithm} produces $2c$-factor approximation result in polynomial time for the $c$-dispersion problem.
		
	\end{proof}

	\section{ $c$-Dispersion Problem is W[1]-hard} \label{hardness} 
	In this section, we discuss the hardness of the $c$-dispersion problem in a metric spaces $(X,d)$ in the realm of parameterized complexity. We prove that the $c$-dispersion problem in a metric spaces $(X,d)$ is $W[1]$-hard. We show a  parameterized reduction from  $k$-independent set problem (known to be $W[1]$-hard \cite{flum2006parameterized}) to the $c$-dispersion problem in  $(X,d)$. 

 We  define parameterized version of both the problems as follows.\\
 
\noindent   
    \textbf{\emph{k-Independent Set Problem} } \\
	\emph{Instance:}  A graph $G=(V,E)$ and a positive integer $k$.\\
	\emph{Parameter:} $k$\\
	\emph{Problem:}  Does there exist an independent set of size $k$ in $G$ ? \\

\noindent	
	\textbf{\emph{$c$-Dispersion Problem } } \\
	\emph{Instance:}  A set  $P$ of $n$ locations and a positive integer $k$.\\
	\emph{Parameter:} $k$\\
	\emph{Problem:}  Given a bound $2c$, does there exist a subset $S_{k} \subseteq P$  of  size $k$ such that $cost_{c}(S_{k}) $ is  $2c$ ?  
	
	\begin{theorem}
		$c$-dispersion problem in a metric space $(X,d)$ is $W[1]$-hard. 
	\end{theorem}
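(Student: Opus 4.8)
The plan is to give a parameterized reduction from the $k$-Independent Set problem (which the excerpt cites as $W[1]$-hard) to the $c$-dispersion problem, mapping the parameter $k$ to $k$, so that an FPT algorithm for $c$-dispersion would yield one for $k$-Independent Set. Since $c$ is a fixed constant, I would first dispose of the degenerate range $k \le c$: an independent set of size at most $c$ can be detected by brute force over all $O(n^c)$ subsets, which is polynomial, so I may assume $k \ge c+1$ and thereby land in the admissible range $[c+1,n]$ demanded by the $c$-dispersion instance.

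Given an instance $(G,k)$ of $k$-Independent Set with $G=(V,E)$ and $|V|=n$, I would set $P=V$ and define the distance function $d(u,v)=1$ if $\{u,v\}\in E$ and $d(u,v)=2$ otherwise (with $d(u,u)=0$). This $(P,d)$ is a metric space: every nonzero distance lies in $\{1,2\}$, so for any triple $d(u,w)\le 2 \le d(u,v)+d(v,w)$, and the triangle inequality holds trivially; symmetry and the identity of indiscernibles are immediate. The output $c$-dispersion instance is $(P,k)$ with decision bound $2c$. The construction runs in time polynomial in $n$ and preserves the parameter $k$, so it is a valid parameterized reduction provided the equivalence below holds.

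The core of the argument is the equivalence that $G$ has an independent set of size $k$ if and only if there exists $S_k\subseteq P$ with $|S_k|=k$ and $cost_c(S_k)=2c$, which I would establish in two directions. For the forward direction, if $S$ is an independent set of size $k$ then every pair in $S$ is non-adjacent, so all pairwise distances equal $2$; hence for each $p\in S$ the $c$ nearest points of $S\setminus\{p\}$ are all at distance $2$, giving $cost_c(p,S)=2c$ and therefore $cost_c(S)=2c$. For the backward direction, note that since every distance is at most $2$ we always have $cost_c(p,S)\le 2c$, so $cost_c(S_k)=2c$ forces $cost_c(p,S_k)=2c$ for every $p\in S_k$; if some $p$ were adjacent to a point $q\in S_k$ then $d(p,q)=1$ would appear among its $c$ nearest distances, yielding $cost_c(p,S_k)\le 2c-1<2c$, a contradiction. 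Thus no point of $S_k$ is adjacent to another, i.e. $S_k$ is an independent set of size $k$.

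The step I expect to require the most care is this backward direction together with the edge-case bookkeeping: I must argue that $2c$ is the exact maximum achievable value and is attained \emph{only} by independent sets, so that the reduction is sound in both directions, and I must confirm that restricting to $k\ge c+1$ loses no generality while keeping the parameter and the instance size polynomially bounded. Once these are in place, the implication that an FPT algorithm for the $c$-dispersion problem would solve $k$-Independent Set in FPT time, combined with the $W[1]$-hardness of $k$-Independent Set, shows that the $c$-dispersion problem in a metric space $(X,d)$ is $W[1]$-hard.
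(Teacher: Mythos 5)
Your proposal is correct and follows essentially the same route as the paper: the identical reduction from $k$-Independent Set with distances $1$ for edges and $2$ for non-edges, and the same equivalence that $cost_c(S_k)=2c$ holds exactly for independent sets. Your explicit handling of the degenerate range $k\le c$ (brute force in $O(n^c)$ time) and the observation that $2c$ is an upper bound on any $cost_c(p,S)$ are small refinements in rigor over the paper's pigeonhole phrasing, but the argument is the same.
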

	
	\begin{proof}
		We prove this by giving a parameterized  reduction from the $k$-independent set problem in simple undirected graphs to the $c$-dispersion problem in  $(X,d)$.  Now, we present a method to construct an instance of $c$-dispersion problem  from  any instance of the $k$-independent set problem in polynomial time.

		Let $G=(V,E)$ be an arbitrary instance of the $k$-independent set problem. Here, $V=\{v_1,v_2, \ldots,v_{n}\}$. We construct an instance of the $c$-dispersion problem from the given instance  of the  $k$-independent set problem.  We use set $V$ of vertices of $G$ as a set of locations $P$, i.e., $P=\{p_i \mid v_i \in V\}$  of $n$ points.  We define distance between points $p_{i}, p_{j}  \in P$ as follows: $d(p_{i}, p_{j})=1$ if $(v_{i},v_{j}) \in E$,  and   $d(p_{i}, p_{j})=2$, otherwise. Note that  this distance function satisfies triangle inequality.  So, the entire process of constructing  an instance  of the $c$-dispersion problem takes polynomial time.

		\textbf{Claim.} $G$ has independent set of size $k$ if and only if there exists a subset $S_{k} \subseteq P$ of size $k$, such that $cost_{c}(S_{k})=2c$.
		
		\noindent {\bf Necessity:}
		Let $I \subseteq V$ be an independent set of $G$ such that  $|I|=k$. We construct a set $S_k \subseteq P$ by selecting $k$ points in $P$ corresponding to  vertices in $I$, i.e., $S_k =\{p_i | v_i \in I\}$. Since, $I$ is an independent set, therefore by construction of an instance of the $c$-dispersion problem from $G$,  distance between any two points in  $S_{k}$ is $2$. This implies that   for each $p \in S_k$, $cost_{c}(p, S_{k})=2c$. Therefore, $cost_{c}(S_k)=2c$.

		\noindent {\bf Sufficiency:}
		Suppose there exists a  subset $S_{k} \subseteq P$, such that $cost_{c}(S_{k})=2c$.  Since $cost_{c}(S_{k})=2c$, this implies that there exists a point $p \in S_k$ such that $cost_{c}(p,S_k)=2c$ and for all $q \in S_k$,  $cost_{c}(q,S_k) \geq cost_{c}(p,S_k)$. Now if for a  point $q \in S_k$, $cost_{c}(q,S_k) > 2c$, then  by pigeon hole principle, distance of $q$ to  one of the $c$ nearest points in $S_k$ is greater than 2, which is not possible as per our construction of an instance of the $c$-dispersion problem. So,  for all points  $q \in S_k$, $cost_{c}(q,S_k)=2c$.  Now, we can create a set $I \subseteq V$ by selecting vertices   corresponding to each point in $S_{k}$, i.e., $I=\{v_i \mid p_i \in S_k\}$. Since distance between each pair of points is $2$,    therefore there does not exist any edge in $I$. Therefore, $I\subseteq V$ is an independent set of size $k$.

		Since $k$-independent set problem is $W[1]$-hard for a parameter $k$ \cite{flum2006parameterized},  and therefore using the above  reduction we conclude that  the $c$-dispersion problem in a metric space $(X,d)$ is also  $W[1]$-hard for the  same parameter $k$. Thus, the theorem.
	\end{proof}
	
\section{Conclusion }\label{Conclusion}
In this article, we studied the $c$-dispersion problem in a  metric space. We presented a polynomial time $2c$-factor  approximation algorithm for the $c$-dispersion problem in a metric space.  The best known approximation factor available for the Euclidean $c$-dispersion problem in $\mathbb{R}^2$ is $2c^2$ \cite{amano}. For $c = 1$, the proposed algorithm will produce a 2-factor approximation result, which is the same as the result in \cite{ravi, tamir1991obnoxious}. Therefore, our proposed algorithm is a generalized version and provides a better approximation result for the problem. We also proved that the $c$-dispersion problem in a  metric space is W[1]-hard.  

\section*{Acknowledgement}
We thank Prof. Arie Tamir (Tel Aviv University, Tel Aviv, Israel) for his useful suggestions on the literature of the problem.

%---------------------------- Bibliography -------------------------------

% Please add the contents of the .bbl file that you generate,  or add bibitem entries manually if you like.
% The entries should be in alphabetical order

\small
\bibliographystyle{abbrv}

\end{document}